\newtheorem{theorem}{Theorem}[]
\titleformat{\section}{\normalfont\Large\bfseries}{\thesection.}{0.5em}{}
\titleformat{\subsection}{\normalfont\centering\large\bfseries}{\Alph{subsection}. }{0em}{}
\titlespacing*{\section}{0pt}{*4}{*1}
\begin{document}
\title{A Thermodynamically Universal Turing Machine}
\author{Jihai Zhu}
\date{\today}

\email{raintown2048@163.com}
\maketitle

\begin{abstract}
     Expanding upon the widely recognized notion of mathematical universality in Turing machines, a concept of thermodynamic universality in Turing machines is introduced. Under the physical Church-Turing thesis, the existence of a thermodynamically universal Turing machine (TUTM) is demonstrated. A TUTM not only has the capability to simulate the input-output behavior of any given Turing machine but also replicate the heat production of that machine up to an additive constant. The finding shows that the hypothesis that the physical world is simulated by Turing machines may not be completely absurd.
\end{abstract} 

\section{Introduction}
    Turing machines (TMs) possess a crucial property known as (mathematical) universality, which means that there exists a class of TMs called \textit{universal Turing machines} (UTMs) capable of simulating any given TM with a specified input program. However, in the physical world, every TM (or any computational device, such as a laptop, that possesses the same computational power as a UTM), has its own physical properties, such as heat generation and work dissipation during operation. This leads us to inquire whether there exists a TM that is \textit{physically universal} in the sense that it can not only simulate any other TM mathematically but also replicate certain physical properties of the specified TM. (Note that the concept of "physical universality" discussed in this paper should not be confused with the concept found in papers on cellular automata\cite{saloPhysicallyUniversalTuring2020}.) 

    When discussing physical universality, it may be better to restrict the scope of physical properties. This paper specifically focuses on the thermodynamic properties of TMs, building upon the research that has been conducted since the 1960s regarding the relationship between thermodynamics and computation. Early studies on the thermodynamics of Turing machines primarily concentrated on logically reversible TMs\cite{bennettLogicalReversibilityComputation1973,bennettTimeSpaceTradeOffs1989}. However, more recently, the work by Kolchinski and Wolpert\cite{kolchinskyThermodynamicCostsTuring2020} has broadened the scope to encompass general TMs, which serves as the main foundation for this paper. In Kolchinski and Wolpert's paper, a TM in the physical world is identified as a discrete-state physical system that is coupled to a heat bath at temperature $T$ and that evolves under the influence of a driving protocol, and the inputs and outputs of the TM are identified as the initial and final states of the physical system, so that the computation performed by the TM is simulated by the dynamics over the status of the physical system. Following Kolchinski and Wolpert's notation, a physical process that adheres to the physical laws and exhibits dynamics corresponding to the input-output map of a TM is referred to as a \textit{realization} of that TM.   

    With the above identification, techniques from stochastic thermodynamics\cite{vandenbroeckEnsembleTrajectoryThermodynamics2015} can be applied to the analysis of the thermodynamics properties of realizations of TMs. In this paper, we particularly focus on the amount of heat, denoted as $Q(x)$, generated when a TM $M$ is executed on each input $x$ and gives the output $M(x)$ after a finite number of steps. We refer to $Q(x)$ as the \textit{heat function} of $M$, and the pair $(M,Q)$ as a \textit{thermodynamic Turing machine}(TTM). 
    
    The pair $(M,Q)$ must obey some condition in order that there exists a realization of $M$ with $Q(x)$ as its heat function, or that the TTM $(M,Q)$ is realizable. The complete condition for the realizability of a TTM has been derived in\cite{kolchinskyThermodynamicCostsTuring2020}, as is recapitulated in Section 3.
    
    The main result of this paper is that, under the assumption of \textit{physical Church-Turing thesis}(PCTT), there exists a \textit{thermodynamically universal Turing machine}(TUTM) which is a (limiting) realizable TTM $(U,Q_U)$ and that, given any realizable TTM $(M,Q)$, it can not only simulate the input-output map $M$, but also replicate the heat function $Q$ (up to an additive constant determined by $M$ and $Q$), with a specified input program.    

    The paper is structured as follows: Section 2 provides a review of the basic concepts of TMs and \textit{algorithmic information theory} (AIT), which will be utilized later in the paper. Section 3 and Section 4 recapitulate the results derived in \cite{kolchinskyThermodynamicCostsTuring2020} regarding the realizability conditions of a TTM and the dominating realizations of TMs under PCTT. Finally, in Section 5, the existence of the TUTM under PCTT is demonstrated.

\section{Background of TMs and AIT}
    \subsection{Turing machines}
        A TM is an abstract model of computation consisting 3 variables:
        \begin{enumerate}
            \item a \textit{tape} variable which is a semi-infinite string $s\in\{0,1,b\}^\infty$, where 0 and 1 are the binary symbols and $b$ is the blank symbol; 
            \item a \textit{pointer} variable $v\in\mathbb{N}$, which is interpreted as specifying a “position” on the tape;
            \item a \textit{head} variable $q$ belongs to a finite set $Q$ of states, which includes a specially designated \textit{start state} and a specially designated \textit{halt state}.
        \end{enumerate} 

        The TM starts with its head in the start state, the pointer set to position 0, and its tape containing some finite string of non-blank symbols, followed by blank symbols. The joint state of the tape, pointer, and head evolves over time according to a discrete-time \textit{update function}, which must be bounded in a finite region. If during that evolution the head ever enters its halt state, the computation considered completed and the contents of the tape at that time serves as the output. However, for certain inputs, a TM may never reach the halt state, thereby failing to complete the computation.

        The input-output map of a TM $M$ can be associated with a (partial) function $M: \mathbb{N} \rightarrow \mathbb{N}$ when a one-to-one correspondence between $\mathbb{N}$ and the set of finite strings $\{0,1\}^*$ is established. 

        A significant characteristic of TMs is their (mathematical) universality. This means that there exists a TM $U: \mathbb{N} \times \mathbb{N} \rightarrow \mathbb{N}$ which, given any TM $M$, there is an index $\ulcorner M\lrcorner \in \mathbb{N}$ such that for every input $x$, $U(\ulcorner M\lrcorner, x) = M(x)$. Essentially, TM $U$ can simulate the computation carried out by any TM and is referred to as a UTM.

    \subsection{Algorithimic information theory}
        AIT studies complexity measures on strings. Because most objects with interest can be described in terms of strings, or as the limit of a sequence of strings, AIT can be applied to lots of areas.

        The \textit{algorithmic complexity} of a string $x$ is defined as the length of the shortest program that, when fed into a UTM $U$, produces $x$ as output:
        \begin{equation}
            C_U(x)\equiv\min_{p:U(p)=x}l(p)
        \end{equation}
        Because any two UTMs $U$ and $U'$ can simulate each other with programs of finite length, $|C_U(x)-C_{U'}(x)|\le c_{U,U'}$. As the algorithmic complexity of a string approaches infinity as its length increase, an additive constant may be ignored and we simply write $C(x)$ for $C_U(x)$.

        Similarly, The \textit{conditional algorithmic complexity} of $x\in\{0,1\}^*$ given $y\in\{0,1\}^*$ is defined as the length of the shortest program that, when paired with $y$ and then fed into a UTM $U$, produces $x$ as output:
        \begin{equation}
            C_U(x|y)\equiv\min_{p:U(y,p)=x}l(p)
        \end{equation}
        For the same reason, we may omit the subscript $U$ and write $C(x|y)$ for $C_U(x|y)$.

        While the algorithmic complexity of a string well characterizes the intuition complexity of a string, it lacks some good properties we want a complexity concept to observe, such as the semi-additivity\cite{liKolmogorovComplexityIts1990}:
        \begin{equation}
            C(x,y)\nleq C(x)+C(y)
        \end{equation}
        To derive a concept of complexity which has better mathematical properties, we have to first turn to the concept of \textit{prefix sets}:A set of strings $A\subset\{0,1\}^*$ is called prefix if there is no two strings in $A$ such that one is the prefix of the other, formally:
        \begin{equation}
            \forall x,y\in A\forall z\in\{0,1\}^*(xz\neq y)
        \end{equation}
        Every prefix set satisfies the \textit{Kraft's inequality}:
        \begin{equation}
            \sum_{x\in A}2^{-l(x)}\leq 1
        \end{equation}

        Now it is able to define the concept of \textit{prefix Turing machines}: a prefix TM is a TM whose domain is a prefix set. There also exists prefix UTMs, so we can define the \textit{prefix algorithmic complexity} of $x\in\{0,1\}^*$ as:
        \begin{equation}
            K_U(x)\equiv\min_{p:U(p)=x}l(p)
        \end{equation}
        where $U$ is a prefix UTM.

        The \textit{prefix conditional algorithmic complexity} $K_U(x|y)$ can be similarly defined and we may write $K(x)$, $K(x|y)$ for $K_U(x)$, $K_U(x|y)$.

\section{A Theorem on the Realizability of a TTM}
    We consider a physical system with a countable state space $X$, which is connected to a heat bath at temperature $T$ and a work reservoir. The system evolves under the influence of a driving protocol, and we are interested in its dynamics over some fixed interval $t\in[0,1]$. In this paper, only deterministic dynamics is considered, so every dynamics on $X$ can be associated to a (partial) function $f:X\to X$, and we define $Q(x)$ as the heat transferred to the heat bath during $[0,1]$. A pair $(f,Q)$ of a (partial) function $f:X\to X$ together with a (partial) function $Q:X\to \mathbb{R}$ is called \textit{realizable} if there exists a physical system with $f$ as its dynamics and $Q$ as its heat function.  
    
    To simplify matters, we assume $k_BT=1$ henceforth, allowing us to equate heat transfer into (or out of) the heat bath with the entropy flow into (or out of) the heat bath. 

    As is the common practice in stochastic thermodynamics, we denote $p_X$ as the initial distribution of the system over its state space, and $p_{f(X)}$ as the final distribution.
    
    The following technical result links the logical properties of a (partial) function f with the heat function of any realization of that $f$ . This result will be central to our analysis, as it will allows us to establish thermodynamic constraints on processes that realize TMs.
    \begin{theorem}
        Given a countable set $X$, let $f:X\to X$ and $Q:X\to\mathbb{R}$ be two (partial) functions with the same domain of definition. The following are equivalent:
        \begin{enumerate}
            \item For all $p_X$, 
                \begin{equation}
                    \langle Q\rangle_{p_X}+S(p_{f(X)})-S(p_X) \ge 0
                \end{equation}
            \item For all $y\in\operatorname{im}f$, 
                \begin{equation}
                    \sum_{x:f(x)=y}2^{-Q(x)} \le 1
                    \label{9}
                \end{equation}
            \item (f,Q) is realizable.
        \end{enumerate}
    \end{theorem}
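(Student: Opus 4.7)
The plan is to establish the three-way equivalence by running the cycle $(3)\Rightarrow(1)\Rightarrow(2)\Rightarrow(3)$. The implication $(3)\Rightarrow(1)$ I would settle by invoking the second law of stochastic thermodynamics: for a realizing process at $k_BT=1$, the quantity $\langle Q\rangle_{p_X}+S(p_{f(X)})-S(p_X)$ is precisely the average total entropy production (entropy flow to the bath plus change in Shannon entropy of the system), which is non-negative for every initial distribution $p_X$ by the standard trajectory-ensemble argument from the references already cited.

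For $(1)\Rightarrow(2)$, the key trick is to choose $p_X$ so that the output-entropy term collapses and a Gibbs-type inequality kicks in. Fix $y\in\operatorname{im}f$ and an arbitrary finite $A\subseteq f^{-1}(y)$, and set $p_X(x)=2^{-Q(x)}/Z_A$ on $A$ with $Z_A:=\sum_{x\in A}2^{-Q(x)}$. Since $f$ sends all of $A$ to the single point $y$, the push-forward $p_{f(X)}$ is a point mass and $S(p_{f(X)})=0$. A one-line computation then gives $\langle Q\rangle_{p_X}-S(p_X)=-\log_2 Z_A$, so $(1)$ forces $Z_A\le 1$ for every finite $A$; taking the supremum over such $A$ yields $(2)$. (If $f^{-1}(y)$ is infinite one needs this finite-$A$ truncation to keep the distribution well-defined.)

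The substantive direction is $(2)\Rightarrow(3)$: from the Kraft-like inequality I must exhibit an actual driving protocol whose trajectory-level heat on input $x$ is exactly $Q(x)$ and whose deterministic dynamics implements $f$. My approach is to view $\{2^{-Q(x)}\}_{x\in f^{-1}(y)}$ as a sub-probability measure on each fiber, and then assemble a quasi-static protocol in the spirit of Landauer erasure: append an auxiliary register recording the preimage $x$, assign energies tied to $Q(x)$, and merge the tagged copies into $y$ through elementary moves satisfying local detailed balance with heat exactly $Q(x)$. Rather than re-derive this from scratch I would appeal to the explicit construction of Kolchinsky and Wolpert and only verify that their sufficient hypothesis reduces to $(2)$.

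I expect $(2)\Rightarrow(3)$ to be the genuine obstacle: it is the only direction that cannot be closed by a short information-theoretic calculation, and it requires concrete physics input (local detailed balance, the existence of quasi-static limits, and in the TM setting, dynamics supported in a finite region). The other two implications are essentially Gibbs-inequality manipulations made possible by a judicious choice of $p_X$.
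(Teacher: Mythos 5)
Your proposal is correct and is essentially a fleshed-out version of what the paper does, since the paper's entire proof is a citation to Kolchinsky and Wolpert: your $(3)\Rightarrow(1)$ and $(1)\Rightarrow(2)$ arguments (second law, then the fiber-supported distribution $p_X(x)=2^{-Q(x)}/Z_A$ giving $\langle Q\rangle_{p_X}-S(p_X)=-\log_2 Z_A$) are the standard ones from that reference, and you correctly identify $(2)\Rightarrow(3)$ as the step requiring their explicit physical construction. No gap beyond the deliberate deferral that the paper itself makes.
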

    \begin{proof}
        See\cite{kolchinskyThermodynamicCostsTuring2020}.
    \end{proof}

\section{The Dominating Realization of a TM}
    The physical Church-Turing thesis(PCTT) states that any function that can be computed by a physical system can be computed by a TM\cite{piccininiComputationPhysicalSystems2021}. In the subsequent sections of this paper, we adopt the assumption that PCTT holds. This assumption requires that the heat function $Q(x)$ of every realizable TTM must be computable. 

    \cite{kolchinskyThermodynamicCostsTuring2020} demonstrated that, for every TM $M$, under PCTT, there exists a (limiting) realizable TTM $(M,Q_{dom})$, where $Q_{dom}(x)\equiv K(x|M(x))$ is smaller than any other realizable heat function $Q$, up to an additive constant determined by $M$ and $Q$. Formally:
    \begin{theorem}
        The heat function $Q_{dom}(x)\equiv K(x|M(x))$ is physically realizable and is better than any other realization of $M$ with heat function $Q$, in the sense that
        \begin{equation}
            Q_{dom}(x)\le Q(x)+K(Q)+K(M)+O(1)
        \end{equation}
    \end{theorem}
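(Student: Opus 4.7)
The plan is to establish the two assertions of the theorem in turn---first the (limiting) realizability of $(M, Q_{dom})$, then the dominance inequality---using the equivalence of conditions (2) and (3) in Theorem 1 as the bridge between thermodynamics and algorithmic information.

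For realizability, I would verify condition (2) of Theorem 1 for $Q_{dom}$. Fixing $y \in \operatorname{im} M$, the inequality to check is $\sum_{x : M(x) = y} 2^{-K(x|y)} \le 1$. This follows immediately from Kraft's inequality applied to the prefix UTM $U$ defining $K(\cdot|\cdot)$: with $y$ held fixed as the auxiliary input, the set $\{p : U(y,p) \text{ halts}\}$ is prefix-free, so $\sum_x 2^{-K(x|y)} \le 1$ already when summing over all $x$, and restricting to preimages of $y$ under $M$ only sharpens it. The ``(limiting)'' qualifier is needed because $K(x|y)$ is only upper semi-computable, so under PCTT the realization must be exhibited as a limit of realizations whose heat functions are computable approximations to $K(x|y)$ from above.

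For the dominance bound, suppose $(M, Q)$ is any realizable TTM. By Theorem 1(2), $\sum_{x : M(x) = y} 2^{-Q(x)} \le 1$ for every $y \in \operatorname{im} M$. I would then construct a prefix program for $x$ conditioned on $y = M(x)$ of length at most $Q(x) + K(M) + K(Q) + O(1)$, as follows. The decoder first reads self-delimiting descriptions of $M$ and $Q$, contributing $K(M) + K(Q)$ bits, and then dovetails the computations of $M$ on all candidate inputs; whenever it discovers an $x'$ with $M(x') = y$, it computes $Q(x')$ and invokes the Kraft--Chaitin theorem to assign, online, a prefix codeword $w_{x'}$ of length exactly $Q(x')$. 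The remaining bits of the decoder's input are interpreted as one such codeword, and the decoder outputs the associated $x'$. Feeding in the concatenation of the descriptions of $M$, $Q$, and $w_x$ yields a prefix program of total length $Q(x) + K(M) + K(Q) + O(1)$, giving the claimed bound on $K(x|y) = Q_{dom}(x)$.

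The main technical point, and what I expect to be the principal obstacle, is the online nature of the Kraft--Chaitin construction: the preimages of $y$ are only recursively enumerable (since $M$ may fail to halt on some inputs), so the decoder must commit to codewords as preimages are discovered, without any global view of the Kraft sum. I would delegate this to the standard Kraft--Chaitin theorem and focus the write-up on ensuring that the descriptions of $M$ and $Q$ can be parsed self-delimitingly (exploiting the prefix structure of $U$) so that their contributions combine additively at $O(1)$ overhead, and that the enumeration used by the encoder to choose $w_x$ is identical to the one used by the decoder to recover $x$ from $w_x$, so correctness is automatic.
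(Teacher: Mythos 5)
Your proposal is correct, and since the paper itself offers no proof beyond a citation to Kolchinsky--Wolpert, your argument is in effect a faithful reconstruction of the cited source's reasoning: realizability via Kraft's inequality for the conditional prefix machine (checking condition (2) of Theorem 1), and dominance via a Kraft--Chaitin coding of the recursively enumerated preimages weighted by $2^{-Q(x)}$, with the $K(M)+K(Q)+O(1)$ overhead coming from self-delimiting descriptions of $M$ and $Q$. The subtleties you flag---upper semi-computability forcing the ``limiting'' qualifier, and the online nature of the Kraft--Chaitin assignment---are exactly the right ones, and both are handled correctly.
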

    \begin{proof}
        See\cite{kolchinskyThermodynamicCostsTuring2020}.
    \end{proof}

    It should be noted that $K(x|M(x))$ is normally uncomputable, but rather upper semi-computable, which means that there is a decreasing sequence $Q_0\geq Q_1\geq\cdots$ whose limit $\lim_{n\to\infty}Q_n=Q_{dom}$. Thus we can consider the dominating heat function as the limit of a sequence of realizable heat functions. Generally, we say a TTM $(M,Q)$ is \textit{limiting realizable} if there exists decreasing realizable heat functions of $M$: $Q_0\geq Q_1\geq\cdots$, whose limit $\lim_{n\to\infty}Q_n=Q$.
    
\section{The Existence of TUTM}
    We now demonstrate the main result of this paper, that there exists a UTUM $(U,Q_U)$, which can simulate both the input-output map and the heat function of any realizable TTM $(M,Q)$.
    
    Before the formal statement and proof of the result, it is better to grasp some intuition of it. As a UTM has an input parameter $\ulcorner M\lrcorner$ to simulate the input-output map of any given TM $M$, a UTUM should have two input parameters, $\ulcorner M\lrcorner$ and $\ulcorner Q\lrcorner$, so as to simulate both the input-output map and the heat function of any realizable TTM $(M,Q)$.
    
    \begin{theorem}
        There exists a TUTM $U$ with heat function $Q_U$ that is limiting realizable and that for any other realizable TTM $(M,Q)$:
        
                \begin{equation}
                    U(\ulcorner M\lrcorner, \ulcorner Q\lrcorner, x)=M(x)
                    \label{11}
                \end{equation}
        
                \begin{equation}
                    |Q_U(\ulcorner M\lrcorner, \ulcorner Q\lrcorner, x)-Q(x)|\le K(M)+K(Q)+O(1)
                    \label{12}
                \end{equation}
        
    \end{theorem}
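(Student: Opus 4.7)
The plan is to let $U$ accept inputs of the form $(\ulcorner M\lrcorner,\ulcorner Q\lrcorner,x)$, where $\ulcorner M\lrcorner$ and $\ulcorner Q\lrcorner$ are self-delimiting (prefix-free) programs on a fixed reference prefix UTM describing the TM $M$ and its (computable, by PCTT) heat function $Q$, respectively; $U$ decodes these two programs, simulates $M(x)$, and outputs $M(x)$, so (\ref{11}) is automatic. For the heat function I propose
\begin{equation*}
Q_U(\ulcorner M\lrcorner,\ulcorner Q\lrcorner,x)\;:=\;\max\bigl\{\,Q(x)+l(\ulcorner M\lrcorner)+l(\ulcorner Q\lrcorner),\ K(\ulcorner M\lrcorner,\ulcorner Q\lrcorner,x\mid M(x))\,\bigr\},
\end{equation*}
where the left entry of the max is there to force the lower bound $Q_U\ge Q$, while the right entry is the dominating heat function of $U$ guaranteed by Theorem~2, whose role will be to enforce the Kraft condition of Theorem~1.

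To prove the upper bound in (\ref{12}), I would pick $\ulcorner M\lrcorner,\ulcorner Q\lrcorner$ to be optimal prefix programs, so $l(\ulcorner M\lrcorner)=K(M)+O(1)$ and $l(\ulcorner Q\lrcorner)=K(Q)+O(1)$, making the left entry of the max at most $Q(x)+K(M)+K(Q)+O(1)$. For the right entry, the standard chain rule gives $K(\ulcorner M\lrcorner,\ulcorner Q\lrcorner,x\mid M(x))\le K(M)+K(Q)+K(x\mid M,Q,M(x))+O(1)$, and the key auxiliary estimate $K(x\mid M,Q,M(x))\le Q(x)+O(1)$ then follows by exactly the Kraft-code construction underlying Theorem~2: relative to the oracle $(M,Q,y)$ with $y=M(x)$, enumerate the preimages $x'$ of $y$ and assign each a prefix code of length $\lceil Q(x')\rceil$, which is valid because realizability of $(M,Q)$ gives $\sum_{x':M(x')=y}2^{-Q(x')}\le 1$. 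Both entries of the max are therefore $\le Q(x)+K(M)+K(Q)+O(1)$, which combined with $Q_U\ge Q$ yields (\ref{12}).

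For limiting realizability, let $K_n\downarrow K$ be the standard decreasing computable upper approximation of (conditional) prefix complexity, and set
\begin{equation*}
Q_{U,n}(p,x)\;:=\;\max\bigl\{\,Q_p(x)+l(p),\ K_n(p,x\mid U(p,x))\,\bigr\}.
\end{equation*}
Each $Q_{U,n}$ is computable when $p$ decodes properly, and dominates $K_n(p,x\mid U(p,x))$. Since Theorem~2 (applied to $U$) provides $(U,K_n(\cdot\mid U(\cdot)))$ as a realizable heat function of $U$, its Kraft sum at any $y\in\operatorname{im}U$ is $\le 1$ by Theorem~1, and therefore so is the Kraft sum for $(U,Q_{U,n})$. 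By the other direction of Theorem~1 each $(U,Q_{U,n})$ is itself realizable; since $K_n\downarrow K$ pointwise, $Q_{U,n}\downarrow Q_U$ pointwise, establishing the limiting realizability of $(U,Q_U)$.

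The main obstacle I anticipate is the auxiliary lemma $K(x\mid M,Q,M(x))\le Q(x)+O(1)$: it asserts that the Kraft-code construction behind Theorem~2 can be performed effectively and uniformly in the pair $(M,Q)$ -- i.e.\ by a single oracle procedure that works for every realizable TTM, not just for one fixed pair -- and is the one place where realizability of $(M,Q)$, together with computability of $Q$ under PCTT, is genuinely used.
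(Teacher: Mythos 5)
Your proposal is correct, but it routes the verification through noticeably heavier machinery than the paper does. The paper simply defines $Q_U(\ulcorner M\lrcorner,\ulcorner Q\lrcorner,x):=Q(x)+K(M)+K(Q)$, so that Eq.~\ref{12} holds with equality, and checks the Kraft condition of Theorem~1 directly by factoring the fiber sum as $\sum_{e,i}2^{-K(e)}2^{-K(i)}\sum_{x:M_e(x)=y}2^{-Q_i(x)}$ and applying $\sum 2^{-K(\cdot)}\le 1$ twice together with realizability of the decoded pair on each fiber. The left entry of your max, $Q(x)+l(\ulcorner M\lrcorner)+l(\ulcorner Q\lrcorner)$, coincides with the paper's definition up to $O(1)$ once the programs are chosen optimally, so the core idea (charge the description lengths of $M$ and $Q$ on top of $Q$ itself) is the same; what differs is your second entry $K(\ulcorner M\lrcorner,\ulcorner Q\lrcorner,x\mid M(x))$ and the apparatus it requires --- the chain rule and the relativized Kraft--Chaitin lemma $K(x\mid M,Q,M(x))\le Q(x)+O(1)$, which you correctly isolate as the crux and which is indeed just the Theorem~2 argument carried out uniformly relative to the oracle $(M,Q,y)$, so it is available. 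What the extra entry buys you is robustness: with it the Kraft sum for $(U,Q_U)$ is bounded by $\sum_{(p,x):U(p,x)=y}2^{-K(p,x\mid y)}\le 1$ unconditionally, even on inputs that decode to a non-realizable pair, a case the paper's factoring argument silently assumes away (its displayed chain even ends with $\sum_x 2^{-Q(x)}\le 1$, which realizability only supplies fiber by fiber). Your treatment of limiting realizability via the computable approximations $K_n\downarrow K$ matches the paper's intended reading of that notion. In short: both arguments work; the paper's is a two-line computation, yours is longer but tighter about the edge cases.
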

    \begin{proof}
        The existence of a UTM which has an invalid second parameter as in Eq.\ref{11} is obvious. We only have to define a limiting realizable heat function $Q_U$ for $U$ to satisfy Eq.\ref{12}.

        Define
        \begin{equation}
            Q_U(\ulcorner M\lrcorner, \ulcorner Q\lrcorner, x)\equiv Q(x)+K(M)+K(Q)
        \end{equation}
        $Q_U$ is upper semi-computable, so we only have to prove that $Q_U$ satisfies Eq.\ref{9}. As the definition of the prefix algorthmic complexity include a prefix set, it satisfies the Kraft's inequality:
        \begin{equation}
            \sum_x2^{-K(x)}\leq 1
        \end{equation}
        So the following inequality holds, thus completing the proof (the last inequality comes from the realizability of $Q$):
        \begin{equation}
            \sum_{e,i,x:M_e(x)=y}2^{-Q_U(e,i,x)}\leq\sum_{e,i,x}2^{-Q(x)}2^{-K(e)}2^{-K(i)}\leq\sum_{x}2^{-Q(x)}\leq 1
        \end{equation}
    \end{proof}

\section{Discussion}
    We have demonstrated the existence of TUTM in the last section, which tells us that, a physically universal TM may in some sense exist and with that machine, we may simluate the physical world with only numeric input. It shows that the hypothesis that the physical world is simulated by Turing machines may not be completely absurd.

\bigskip
\printbibliography
\end{document}